\newcommand{\qed}{\penalty 1000 \hfill\penalty 1000{\small $\square$}\par\medskip}
\newenvironment{proof}{{\it Proof:}}{\qed}
\newcommand{\im}[2]{#1\langle #2 \rangle}
\newcommand{\et}[2]{\eta_{#1,#2}}
\newcommand{\rh}[2]{\rho_{#1 \rightarrow #2}}
\newcommand{\ep}[1]{\epsilon_{#1}}
\newcommand{\tw}{\mbox{\rm tw}}
\newcommand{\cw}{\mbox{\rm cw}}
\newcommand{\mcw}{\mbox{\rm mcw}}
\newcommand{\rw}{\mbox{\rm rw}}
\newcommand{\boolw}{\mbox{\rm boolw}}
\newtheorem{theorem}{Theorem}
\newtheorem{corollary}[theorem]{Corollary}
\newtheorem{proposition}[theorem]{\bf Proposition}
\newtheorem{claim}{Claim}
\newtheorem {definition}{Definition}
\title{Multi-Clique-Width}
\author{Martin F\"urer%
\thanks{Research supported in part by NSF Grant CCF-1320814.} \\
	Department of Computer Science and Engineering \\
	Pennsylvania State University \\
	University Park, PA 16802,  USA \\
	furer@cse.psu.edu \\}
\date{November 2, 2015}
\begin{document}

\maketitle

\begin{abstract}
Multi-clique-width is obtained by a simple modification in the definition of clique-width. It has the advantage of providing a natural extension of tree-width. Unlike clique-width, it does not explode exponentially compared to tree-width. Efficient algorithms based on multi-clique-width are still possible for interesting tasks like computing the independent set polynomial or testing $c$-colorability. In particular, $c$-colorability can be tested in time linear in $n$ and singly exponential in $c$ and the width $k$ of a given multi-$k$-expression. For these tasks, the running time as a function of the multi-clique-width is the same as the running time of the fastest known algorithm as a function of the clique-width. This results in an exponential speed-up for some graphs, if the corresponding graph generating expressions are given. The reason is that the multi-clique-width is never bigger, but is exponentially smaller than the clique-width for many graphs. This gap shows up when the tree-width is basically equal to the multi-clique width as well as when the tree-width is not bounded by any function of the clique-width. \\
Keywords: Clique-width, Parameterized complexity, Tree-width, Independent set polynomial
\end{abstract}

\newpage

\section{Introduction}
Tree-width is the first and by far the most important width parameter. It is motivated by the fact that almost all interesting problems that are hard for general graphs allow  efficient algorithms when restricted to trees. Furthermore such algorithms are often quite trivial. The promise of the notion of tree-width is to extend such efficient algorithms to much larger classes of tree-like graphs. Graphs of bounded tree-width have one shortcoming though. They are all sparse.

Clique-width \cite{CourcelleER93} is the second most important width parameter. 
It has been defined by Courcelle and Olariu \cite{CourcelleO2000} based on previously used operations \cite{CourcelleER93}.
It is intended to make up for the main shortcoming of the class of graphs of bounded tree-width. The idea is that many graphs are not sparse, but are still constructed in a somewhat simple and uniform way. One would expect to find efficient algorithms for such graphs too. The most extreme example is the clique. It's hard to find a natural problem that is difficult on a clique. 

It turns out that graphs of bounded tree-width actually also have bounded clique-width \cite{CourcelleO2000,CorneilR2005}, and many efficient algorithms extend to the larger class. Indeed, every graph property expressible in $\mathcal MS_1$, the monadic second order logic with set quantifiers for vertices only, is decidable in linear time for graphs of bounded clique-width \cite{CourcelleMR2000}.

The problem is that the containment of the bounded tree-width graphs in bounded clique-width graphs is not obvious. Furthermore, the generalization from bounded tree-width to bounded clique-width does not come cheap. The width can blow up exponentially, with a potential for a significant loss of efficiency for many algorithms. 

This creates a cumbersome situation for the many problems that have efficient solutions in terms of the tree-width as well as in terms of the clique-width, assuming the corresponding decompositions are known. One would like to run the algorithm based on clique-width to cover a much larger class of graphs, but that would mean an exponential sacrifice in running time for some graphs with small tree-width. Arguably, there should be a notion of a width parameter that bridges this gap more graciously. 
The current author has searched for some time for such a parameter. Ideally, we would like to have a natural generalization of tree-width and clique-width, and there should be no exponential blow-up in the parameter value.

The second objective has been obtained
with the notion of fusion-width \cite{Furer2014}. It has been shown before that the fusion operation does not produce unbounded clique-width graphs from bounded clique-width graphs \cite{CourcelleM2002}. Indeed, there is a much tighter relationship. Graphs of tree-width $k$ have fusion-width at most $k+2$ \cite{Furer2014}, while in the worst case, they have clique-width exponential in $k$ \cite{CorneilR2005}.

This is a very desirable property of fusion width.
The drawback is that attaching a fusion operation is somewhat unnatural. It is an artificial push of the tree-width concept into a clique-width-like environment.  Here, we present a far more natural width parameter that achieves this goal in a more direct way. We call it \emph{multi-clique-width}, it is obtained by a simple modification in the notion of clique-width, namely by allowing every vertex to have multiple labels. It turns out that multiply labeled graphs have been used before \cite{CourcelleO2000} in a more auxiliary role, and a variant of the multi-clique-width has actually appeared in the literature under the name m-clique-width \cite{CourcelleT2010} in the context of preprocessing for shortest path routing computations.

In this paper, we propose the multi-clique-width as a serious contender of clique-width. This powerful parameter has some very desirable properties. Its definition is equally simple and natural as that of clique-width. The multi-clique-width is never bigger than the clique-width, but often exponentially smaller \cite{CourcelleT2010}. And most importantly, there is no explosion of the width when moving from tree-width to multi-clique-width. Furthermore, there are interesting algorithms where the dependence of the running time on the (potentially much smaller) multi-clique-width is about the same as the dependence on the clique width for a similar algorithm working with clique-width. Thus, multi-clique-width allows some tasks to be solved much more efficiently than previously known.

There are other important width parameter, the rank-width \cite{OumS2006} and the boolean-width \cite{Bui-XuanTV2011}, that share some significant properties with the multi-clique-width. It too is never bigger than the clique-width and can be exponentially smaller. So why do we want to investigate yet another similar parameter? We claim that the rank-width serves a very different purpose than multi-clique-width.

Rank-width, boolean-width, clique-width, and multi-clique-width are all equivalent in the sense that the exact same problems are solvable in polynomial time for bounded width. If one of these parameters is bounded (by a constant), then so are the others. Rank-width has been introduced with this equivalence in mind \cite{OumS2006}. Before, the graphs of bounded clique-width could not be identified computationally. Therefore, graphs of bounded clique-width could only be handled efficiently, when a corresponding $k$-expression has been given. Now the rank-width $\rw(G)$ can be approximated, and a $2^{3\cw(G) + 2} -1$-expression can be computed efficiently \cite{OumS2006}. Still, the clique-width is the easiest to run many application algorithms. The rank-width is often not used directly for this purpose. In theoretical investigations, the exponential bound on the clique-width has not often been viewed as a major concern, because the goal has been to handle bounded clique-width graphs in polynomial time, not to speed them up, even when the rank-width is much smaller than the clique-width.

For our use of multi-clique-width, the motivation is different. We conjecture multi-clique-width to be NP-hard. Nevertheless it might be that the multi-clique-width could be approximated in polynomial time. We don't yet know. If so, multi-clique-width could serve the role of rank-width, but there is no need for this duplication. What is important in this case, is that multi-clique-width can easily be used directly in the design of some efficient algorithms. As a result, it can provide exponential speed-ups to application algorithms. These are exponential speed-ups in the parameter, meaning that the class of graphs with bounded parameter value would not change, just the computations get much faster.

In comparison with boolean-width, when the corresponding decompositions are given, we notice that the known efficient algorithms for NP-complete problems are quadratic in $n$ for bounded width, while for $\mathcal MS_1$
expressible graph properties, we have linear time algorithms \cite{CourcelleMR2000} for bounded clique-width or multi-clique-width. Furthermore, we will illustrate that such algorithms can be quite simple and efficient as a function of the multi-clique-width.

\section{Definitions and Preliminaries}
We use the standard notions of tree decomposition, tree-width, $k$-expression, and clique-width.
\begin{definition}
 A \emph{tree decomposition} of a graph $G=(V,E)$ is a pair $(\{B_{i } \, : \, i \in I \}, T)$, 
 where $T= (I,F)$ is a tree and each node $i \in I$ has a subset $B_{i} \subseteq V$ of vertices (called the bag of $i$) associated to it with the following properties.
 \begin{enumerate}
\item 
$\bigcup_{i \in I} B_{i} = V$, i.e., each vertex belongs to at least one bag.
\item 
For all edges $e=\{p,q\} \in E$, there is at least one $i \in I$ with $\{p, q\} \subseteq B_{i}$,
i.e., each edge is represented by at least one bag.
\item
For every vertex $v \in V$, the set of indices $i$ of bags containing $v$ induces a subtree of $T$
(i.e., a connected subgraph).
\end{enumerate}
\end{definition}

\begin{definition}
The \emph{width of a tree decomposition} is 1 less than its largest bag size.
 The \emph{tree-width} $\tw(G)$ \cite{RobertsonS84} of a graph $G$ is the width of a minimal width tree decomposition of $G$. 
\end{definition}

It is NP-complete to decide whether the tree-width of a graph is at most $k$ (if $k$ is part of the input) \cite{ArnborgCP87}.
For every fixed $k$, there is a linear time algorithm deciding whether the tree-width is at most $k$, and if that is the case, producing a corresponding tree decomposition \cite{Bodlaender96}. For arbitrary $k$, this task can still be approximated. A tree decomposition of width $O(k \log n)$ can be found in polynomial time 
\cite{BodlaenderGHK95}, and  in time $O(c^k n)$ almost a $5$-approximation \cite{BodlaenderDDFLP2013} can be found (a tree of width at  most $5k+4$ to be precise).

It is often convenient to view $T$ as a rooted tree, where an arbitrary fixed node has been chosen as the root.
\begin{definition}
 A \emph{semi-smooth tree decomposition} of width $k$ is a rooted tree decomposition where the bag $B_i$ of every node $i$ contains exactly $1$ vertex that is not in the bag of the parent node.
 For rooted trees $T$ with $v \in B_i \setminus B_{p(i)}$ for $p(i)$ being the parent of $i$, we say that \emph{node $i$ is the \emph{home} of vertex $v$}. 
 \end{definition}
 In other words, the home of a vertex $v$ is the highest node whose bag contains $v$. 
 The bag $B_r$ of the root $r$ of a semi-smooth tree decomposition contains just one vertex.

\begin{proposition}
 Every graph $G=(V,E)$ has a semi-smooth tree decomposition of width $k=\tw(G)$ with $|I|=|V|-k$. Any tree decomposition can be transformed into a semi-smooth tree decomposition in linear time. 
 \end{proposition}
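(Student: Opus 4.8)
The plan is to reduce both assertions to the classical notion of a \emph{smooth} tree decomposition: one in which every bag has exactly $k+1$ vertices and any two adjacent bags share exactly $k$ of them. The key observation is that a smooth tree decomposition of width $k$, rooted at an arbitrary node, is automatically semi-smooth. Indeed, for a non-root node $i$ with parent $p(i)$ we have $|B_i| = |B_{p(i)}| = k+1$ and $|B_i \cap B_{p(i)}| = k$, whence $|B_i \setminus B_{p(i)}| = 1$, which is precisely the defining property; moreover the root bag then has $k+1$ vertices, which is exactly what will force the count $|I| = |V|-k$.

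For the first assertion I would invoke (or briefly reprove) the standard fact that every graph $G$ with $\tw(G) = k$ has a smooth tree decomposition of width $k$ — here $|V| \ge k+1$, since a minimum-width decomposition has a bag of size $k+1$, so this is meaningful. Rooting such a decomposition yields a semi-smooth one. To get $|I| = |V| - k$ I would count \emph{homes}: by the connectivity axiom the bags containing a fixed vertex $v$ form a subtree with a unique topmost node, its home, so the relation partitions $V$ into the blocks $\{\,v : i \text{ is the home of } v\,\}$, one block per node $i \in I$. The root is the home of all $k+1$ vertices in its bag, and every other node $i$ is the home of exactly the single vertex of $B_i \setminus B_{p(i)}$; hence $|V| = (k+1) + (|I| - 1)$, i.e.\ $|I| = |V| - k$.

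For the second assertion I would spell out the smoothing procedure on an arbitrary tree decomposition of width $k$ and check that it runs in linear time for fixed $k$. Step 1: while some edge $\{i,j\}$ has $B_i \subseteq B_j$, contract it; afterwards, for every edge neither bag contains the other, so in any rooting every node is the home of at least one vertex and there are at most $|V|$ bags. Step 2 (padding): while some bag is strictly smaller than an adjacent bag $B_j$, insert into it a vertex of $B_j$ that it misses (re-contracting any equal adjacent bags so created). This strictly increases $\sum_i |B_i|$, so it terminates; it preserves the three axioms — the connectivity axiom because the enlarged bag becomes adjacent to one already containing the inserted vertex — and no bag ever exceeds size $k+1$, so at termination all bags have the common size $k+1$. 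Step 3 (interpolation): for every edge $\{i,j\}$ with $|B_i \cap B_j| < k$, replace it by a path $i = i_0,\dots,i_s = j$ with $s = k+1-|B_i \cap B_j|$ whose bags move from $B_i$ to $B_j$ one swap at a time (drop a vertex of $B_i \setminus B_j$, add a vertex of $B_j \setminus B_i$); a direct check shows validity is preserved and every pair of adjacent bags now differs by exactly one swap. Finally root arbitrarily. Each of the $O(k|V|)$ elementary operations is local and costs $O(k)$ time provided one keeps a worklist of violating edges/bags rather than rescanning the tree, so the whole transformation is linear for fixed $k$, and the same home count gives $|I| = |V| - k$ for the resulting (width-$k$) decomposition.

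I expect the crux to be Step 2, the padding lemma — essentially the classical smoothing step — where the care goes into (a) verifying the connectivity axiom survives each single insertion, (b) confirming no bag ever grows past size $k+1$ (so the width is unchanged), and (c) organising the $\Theta(k|V|)$ insertions so that the total work stays linear rather than incurring a rescan per insertion. Steps 1 and 3, the rooting, and the home-counting are routine once Step 2 is in place.
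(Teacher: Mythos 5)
Your proof is correct, but it takes a noticeably heavier route than the paper. The paper's notion of semi-smooth is much weaker than the classical \emph{smooth} decomposition you construct: it only asks that each non-root bag contain exactly one vertex absent from the parent's bag, and imposes no condition on bag sizes or on $|B_{p(i)}\setminus B_i|$. Accordingly the paper's transformation is just your Steps~1 and~3 in disguise --- root the tree, contract any node whose bag is contained in its parent's bag, and subdivide any edge along which more than one new vertex appears --- with no padding step at all. Your Step~2 (padding every bag up to size $k+1$) is the classical smoothing lemma and is the most delicate part of your argument, yet it is not needed for semi-smoothness; it also pushes the cost to $\Theta(k^2|V|)$ elementary operations, which is linear in $n$ for fixed $k$ but not linear in the size of the given decomposition, whereas the paper's lighter transformation is. What your detour does buy is a clean derivation of the count $|I|=|V|-k$: with all bags of size $k+1$ the root is the home of exactly $k+1$ vertices and every other node of exactly one, giving $|V|=(k+1)+(|I|-1)$. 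The paper's own construction, combined with its remark that the root bag of a semi-smooth decomposition is a singleton, makes every node the home of exactly one vertex and hence yields $|I|=|V|$; so on this point your argument actually matches the stated proposition better than the paper's sketch does. Your home-counting argument via the connectivity axiom is the same one the paper implicitly uses.
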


\begin{proof}
 Do a depth-first search of the tree and omit nodes whose bag is contained in the bag of the parent. Insert intermediate nodes if more than one vertex has the same home. 
 \end{proof}

We use the standard notation of $k$-expression to define clique-width.
\begin{definition}
A \emph{$k$-expression} is an expression formed from the atoms  $i(v)$, the two unary operations $\et{i}{j}$ and $\rh{i}{j}$, and one binary operation $\oplus$ as follows.
\begin{itemize}
\item $i(v)$ creates a vertex $v$ with label $i$, where $i$ is  from the set $\{1, \dots , k\}$.
\item $\et{i}{j}$ creates an edge between every vertex with label $i$ and every vertex with label $j$ for $i \neq j$ (with $i,j \in \{1, \dots , k\}$).
\item $\rh{i}{j}$ changes all labels $i$ to $j$ (with $i,j \in \{1, \dots , k\}$).
\item $\oplus$ (join-operation) does a disjoint union of the generated labeled graphs. 
\end{itemize}
Finally, the \emph{generated graph} is obtained by deleting the labels.
\end{definition}
We also allow multi-way join-operations, as $\oplus$ is associative.

\begin{definition}
The \emph{clique-width} $\cw(G)$ of a graph is the smallest $k$ such that the graph can be defined by a $k$-expression \cite{CourcelleO2000}.
\end{definition}

Computing the clique-width is NP-hard \cite{FellowsRRS2006}.
Thus, one usually assumes that a graph is given together with a $k$-expression.

Theoretically, this is not necessary, because
for constant $k$, the clique-width can be approximated by a constant factor in polynomial time \cite{OumS2006,Oum2008}. But these factors are exponential in $k$.

The new notion of multi-clique-width is defined similarly to the clique-width. The essential difference is that every vertex can have any set of labels (including singleton sets and the empty set). There is a new operation $\ep{i}$ to delete a label.
The creation of multiple vertices with the same labels by one command is an unessential convenience.

\begin{definition}
A \emph{multi-$k$-expression} is an expression formed from the atoms  $\im{m}{i_1,\dots,i_{\ell}}$, the three unary operations $\et{i}{j}$, $\rh{i}{S}$, and $\ep{i}$, as well as the binary operation $\oplus$ as follows. Assume $i \in \{1, \dots , k\}$, $j \in \{0, \dots , k\}$ and $\emptyset \subseteq S, \{i_1,\dots,i_{\ell}\} \subseteq \{1, \dots , k\}$). 
\begin{itemize}
\item $\im{m}{i_1,\dots,i_j}$ with $m$ a positive integer and $i_1 < \dots < i_j \leq k$, creates $m$ vertices, each with label set $\{i_1,\dots,i_j\}$. 
\item $\et{i}{j}$ creates an edge between every vertex $u$ with label $i$ and every vertex $v$ with label $j$.
This operation is only allowed when there are no vertices with label $i$ and $j$ simultaneously, in particular $i \neq j$. 
\item $\rh{i}{S}$ replaces replaces label $i$ by the set of labels $S$, i.e., if a vertex $v$ had label set $S'$ with $i \in S'$ before this operation, then $v$ has label set $(S' \setminus {i}) \cup S$ after the operation. 
\item $\ep{i}$ deletes the label $i$ from all vertices.
\item $\oplus$ (join-operation) does a disjoint union of the generated labeled graphs.
\end{itemize}
Finally, the \emph{generated graph} is obtained by deleting the labels.
\end{definition}

$S$ and $\{i_1,\dots,i_{\ell}\}$ are allowed to be empty, even though the latter is not very interesting, as it only creates isolated vertices.
Note that $\ep{i}$ is just the special case of $\rh{i}{S}$ with $S=\emptyset$. We list it separately, because one might want to consider the  \emph{strict multi-$k$-expressions}  without $\rh{i}{S}$. In Theorem~\ref{thm:mtw} below, $\rh{i}{S}$ is not used. Alternatively, one might restrict $\rh{i}{S}$ to the classical case with $S$ being a singleton. The relative power of these 3 versions might be worth studying.

\begin{definition}
The \emph{multi-clique-width} $\mcw(G)$ of a graph is the smallest $k$ such that the graph can be defined by a multi-$k$-expression.
\end{definition}

We also define boolean-width in order to compare it with multi-clique-with.
\begin{definition}
 A \emph{decomposition tree} of a graph $G=(V,E)$ is a tree $T$ where $V$ is the set of leaves and where all internal nodes have degree 3. \\
 Every edge $e$ of $T$ defines a partition of $V$ in to $X$ and $\overline{X}$ consisting of the leaves of the two trees obtained from $T$ by removing $e$. \\
 The \emph{set of unions of neighborhoods} of $X$ across the cut $\{X, \overline{X}\}$ is the set 
 \[ U(X) = \{ S' \subseteq \overline{X} \: |\: \exists S \subseteq X \;\; S' = N(S) \cap \overline{X}\}. \]
 $\mbox{bool-dim}(X) = \log_2 |U(X)|$. \\
 The \emph{boolean-width} of $G$ is the minimum over all trees $T$ of the maximum over all cuts $\{X, \overline{X}\}$ defined by an edge $e$ of $T$ of $\mbox{bool-dim}(X) = \log_2 |U(X)|$.
\end{definition}

\section{Relationship between Different Width Parameters}
Multi-clique-width extends the notions of tree-width and of clique-width in a natural way.

\begin{proposition} \label{prop:mcw-cw}
 For every graph $G$, $\mcw(G) \leq \cw(G)$.
\end{proposition}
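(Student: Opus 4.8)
The plan is to observe that a $k$-expression is, essentially verbatim, a multi-$k$-expression in which every vertex happens to carry exactly one label, and then to check that this single-label invariant is preserved under the obvious translation and that it makes the side condition on $\et{i}{j}$ vacuous. Concretely, fix a $k$-expression $\phi$ with $k=\cw(G)$ that generates $G$. I would define a translation $\phi \mapsto \widehat{\phi}$ on the syntax tree: replace each atom $i(v)$ by $\im{1}{i}$; replace each relabelling $\rh{i}{j}$ by $\rh{i}{\{j\}}$; and leave every $\et{i}{j}$ and every $\oplus$ unchanged. Note that $\widehat{\phi}$ is a legal multi-$k$-expression — in fact a \emph{strict} one, using only singleton relabellings and no $\ep{i}$ — so this argument simultaneously shows that the strict variant already suffices to match clique-width.

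The key step is an induction on the structure of $\phi$ showing that, for every subexpression $\psi$, the labeled graph produced by $\widehat{\psi}$ is obtained from the labeled graph produced by $\psi$ by reading each label $i$ as the singleton label set $\{i\}$; in particular every vertex of $\widehat{\psi}$'s graph carries exactly one label. The base case $i(v)\mapsto \im{1}{i}$ is immediate. For $\oplus$, disjoint union clearly preserves the single-label property. For $\rh{i}{j}$, a vertex whose unique label is $i$ acquires unique label $j$ while all other vertices are untouched, which is exactly what $\rh{i}{\{j\}}$ does to singleton label sets. For $\et{i}{j}$, the induction hypothesis tells us that no vertex carries two labels at all, so in particular none carries $i$ and $j$ simultaneously; hence the operation is permitted in the multi-$k$-expression, and it inserts precisely the same edges as in the $k$-expression. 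Deleting labels at the end yields $G$ on both sides.

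Since $\widehat{\phi}$ is a multi-$k$-expression generating $G$, we conclude $\mcw(G)\le k=\cw(G)$. I do not expect any genuine obstacle here: the translation is syntactic and the induction is routine. The only point that warrants a word of care is the restriction forbidding $\et{i}{j}$ when some vertex has both labels $i$ and $j$, and the single-label invariant carried along in the induction disposes of it.
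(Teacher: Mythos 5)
Your proposal is correct and is exactly the argument the paper has in mind: the paper dismisses this as immediate from the definitions, and your syntactic translation together with the single-label invariant (which makes the side condition on $\et{i}{j}$ vacuous) is precisely the routine verification being elided. Nothing further is needed.
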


\begin{proof}
This follows immediately from the definition. 
\end{proof}

\begin{proposition}
 For every graph $G$, $\cw(G) \leq 2^{\mcw(G)}$.
\end{proposition}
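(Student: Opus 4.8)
The plan is to encode each label \emph{set} that may occur during a multi-$k$-expression as a single label of a clique-width expression. Fix $k=\mcw(G)$ and a multi-$k$-expression $\phi$ defining $G$, and work with the label alphabet consisting of all subsets $S\subseteq\{1,\dots,k\}$ — exactly $2^k$ labels, one of which is $\emptyset$. I would build, by induction on the structure of $\phi$, a $2^k$-expression $\psi$ satisfying the invariant: in the labeled graph produced by the current subexpression, a vertex carries the clique-width label $S$ if and only if in the corresponding multi-clique-width graph it carries the label set $S$. Since both expressions generate the same underlying graph once labels are dropped, this yields $\cw(G)\le 2^k$.

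Most operations translate mechanically. An atom $\im{m}{i_1,\dots,i_j}$ becomes the $m$-fold $\oplus$ of single-vertex atoms each creating a vertex with the clique-width label $\{i_1,\dots,i_j\}$ (the case $j=0$ uses the label $\emptyset$), and $\oplus$ translates to $\oplus$. An edge operation $\et{i}{j}$ becomes the composition, over all pairs of labels $S$ with $i\in S$ and $S'$ with $j\in S'$ that are present at that point, of the clique-width operations $\et{S}{S'}$. The side condition attached to the multi-clique-width $\et{i}{j}$ — that no vertex carries both $i$ and $j$ — is exactly what this needs: it forces $S\neq S'$ for each such pair (so every $\et{S}{S'}$ is a legal clique-width operation) and it makes the ``$i$-side'' and ``$j$-side'' vertex sets disjoint (so no self-loops and no spurious edges are created), and the union over all pairs is precisely the edge set that $\et{i}{j}$ adds.

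The renamings require some care, because several present label sets may need to be renamed to the same target. For $\ep{i}$, and more generally for $\rh{i}{S}$ with $i\notin S$, I would run through all present label sets $T$ with $i\in T$ and apply the clique-width renaming $\rh{T}{(T\setminus\{i\})\cup S}$; since each target omits $i$ while each source contains $i$, no source is ever the target of a later renaming, and when two sources share a target the resulting merge is exactly what the multi-clique-width operation does — so any processing order works. The genuinely delicate case is $\rh{i}{S}$ with $i\in S$, where the image $T\mapsto T\cup S$ still contains $i$, so a careless sequence of renamings could move a vertex twice or into a label that a later renaming empties. I would handle it by processing the present label sets containing $i$ in order of \emph{decreasing} size, applying $\rh{T}{T\cup S}$ and skipping whenever $S\subseteq T$. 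Since $T\cup S$ is a fixed point of $X\mapsto X\cup S$ and, when $S\not\subseteq T$, strictly larger than $T$, it has already been processed as a no-op, so the class of vertices labeled $T\cup S$ is still intact when $T$'s vertices are merged into it, and a short check shows this merge is exactly the effect of $\rh{i}{S}$ on the two classes.

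The bulk of the work is this bookkeeping. The two spots where something could actually go wrong are the ones just discussed: ordering the clique-width renamings so that no collision among the $2^k$ labels corrupts the invariant, and invoking the side condition of $\et{i}{j}$ to see that the induced clique-width edge operations are both well-defined and edge-faithful. With those settled, the structural induction goes through, and since the label alphabet has size $2^k$ we obtain $\cw(G)\le 2^{\mcw(G)}$.
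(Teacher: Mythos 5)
Your proposal is correct and is exactly the paper's approach: the paper's entire proof is the one-line instruction ``Use a new label for every set of labels,'' and your write-up simply fills in the bookkeeping (the ordering of the renamings and the use of the side condition on $\et{i}{j}$) that the paper leaves implicit.
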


\begin{proof}
Use a new label for every set of labels. 
\end{proof}

\begin{corollary} \label{cor:bounded}
A class of graphs has bounded clique-width if and only if it has bounded multi-clique-width.
\end{corollary}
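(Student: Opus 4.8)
The plan is to read the corollary off directly from the two preceding propositions, which together sandwich each parameter between the other and its exponential, so no graph-theoretic construction is needed. First I would unwind the phrase ``bounded width'': a class $\mathcal{C}$ of graphs has bounded clique-width if there is a single constant $c$ with $\cw(G) \leq c$ for every $G \in \mathcal{C}$, and likewise $\mathcal{C}$ has bounded multi-clique-width if there is a constant $k$ with $\mcw(G) \leq k$ for every $G \in \mathcal{C}$.

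For the forward implication, assume $\mathcal{C}$ has bounded clique-width, witnessed by a constant $c$. Then Proposition~\ref{prop:mcw-cw} yields $\mcw(G) \leq \cw(G) \leq c$ for every $G \in \mathcal{C}$, so $\mathcal{C}$ has bounded multi-clique-width with the same bound $c$. For the converse, assume $\mathcal{C}$ has bounded multi-clique-width, witnessed by a constant $k$. Then the proposition stating $\cw(G) \leq 2^{\mcw(G)}$ gives $\cw(G) \leq 2^{\mcw(G)} \leq 2^{k}$ for every $G \in \mathcal{C}$, so $\mathcal{C}$ has bounded clique-width with bound $2^{k}$.

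There is essentially no obstacle here; the only point worth noting is that ``bounded'' refers to a uniform constant over the whole class, and that the bound may degrade exponentially when passing from a multi-clique-width bound to a clique-width bound. This exponential gap in the other direction is exactly the phenomenon the rest of the paper exploits, but it is harmless for the qualitative equivalence asserted by the corollary, which is purely the combination of the two inequalities.
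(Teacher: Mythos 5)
Your proof is correct and takes exactly the approach of the paper, which simply cites the two preceding propositions; you have merely spelled out the (trivial) details of combining $\mcw(G) \leq \cw(G)$ and $\cw(G) \leq 2^{\mcw(G)}$ with the definition of boundedness over a class.
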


\begin{proof}
This is implied by the previous two propositions.  
\end{proof}

\begin{corollary}
 Properties of graphs expressible in monadic second order logic without quantifiers over sets of edges are linear time decidable for graphs of bounded multi-clique-width.
\end{corollary}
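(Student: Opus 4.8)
The plan is to reduce the statement to the theorem of Courcelle, Makowsky and Rotics \cite{CourcelleMR2000}, which says that every property expressible in $\mathcal{MS}_1$ (monadic second-order logic with set quantification over vertices only, i.e.\ without quantifiers over sets of edges) is decidable in linear time on a graph that is given together with a $k$-expression of bounded width $k$. By Corollary~\ref{cor:bounded} a graph of bounded multi-clique-width has bounded clique-width, but for a linear-time algorithm we need this passage to be effective: it suffices to show that, from a multi-$k$-expression of width $k$ for $G$, one can compute in linear time an ordinary $k'$-expression for $G$ with $k'$ a function of $k$ only.

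For the translation I would make effective the idea behind the proposition $\cw(G) \leq 2^{\mcw(G)}$. Introduce one ordinary label $\langle T \rangle$ for every subset $T \subseteq \{1,\dots,k\}$, so that a vertex whose current label set in the multi-expression is $T$ carries the single label $\langle T \rangle$ in the ordinary expression. Each operation of the multi-$k$-expression is replaced by a block of ordinary operations of length bounded by a function of $k$ alone: an atom $\im{m}{i_1,\dots,i_j}$ becomes the creation of $m$ vertices with label $\langle\{i_1,\dots,i_j\}\rangle$; $\et{i}{j}$ becomes the sequence of the operations $\et{\langle T \rangle}{\langle T' \rangle}$ ranging over all $T$ with $i \in T$ and all $T'$ with $j \in T'$ (fewer than $4^k$ pairs, and by the side condition that no vertex carries both $i$ and $j$ these two families of classes are disjoint, so no edge inside a class is ever created); $\rh{i}{S}$ becomes a sequence of relabelings sending each class $\langle T \rangle$ with $i \in T$ to the class $\langle (T\setminus\{i\})\cup S \rangle$, routed through at most $2^{k-1}$ auxiliary labels so that no vertex is relabeled twice when the target of one class is the source of another, so $2^{k+1}$ labels suffice overall; $\ep{i}$ is the special case $S=\emptyset$; and $\oplus$ is kept unchanged. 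The resulting expression has length $O(1)$ times (the constant depending only on $k$) that of the input, so it is produced in linear time.

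Composing the linear-time translation with the linear-time $\mathcal{MS}_1$-evaluation of \cite{CourcelleMR2000} on the resulting bounded-width expression gives the claimed linear-time decision procedure. The part that needs the most care is verifying that the simulation is faithful --- that the block for $\et{i}{j}$ creates exactly the intended edges and that the relabeling block correctly realizes $\rh{i}{S}$; both come down to bookkeeping on the subset-classes involved, and the side condition in the definition of $\et{i}{j}$ is precisely what makes the edge block behave correctly. As is standard for clique-width-type results, one should also state that the graph is presented together with a defining multi-$k$-expression: computing the multi-clique-width is conjectured to be NP-hard and no approximation is known, so, unlike for clique-width, one cannot fall back on \cite{OumS2006} to obtain an expression from the bare graph.
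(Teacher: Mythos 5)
Your proof is correct and follows essentially the same route as the paper, which simply cites Corollary~\ref{cor:bounded} together with the meta-theorem of \cite{CourcelleMR2000}. Your additional care in making the translation from a multi-$k$-expression to an ordinary expression effective (and linear-time), and your remark that the expression must be given as part of the input, are legitimate details that the paper's one-line proof leaves implicit, but they do not change the underlying argument.
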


\begin{proof}
This follows from Corollary~\ref{cor:bounded} and the corresponding meta-theorem for clique-width \cite{CourcelleMR2000}. 
\end{proof}

\begin{theorem} \label{thm:mtw}
If tree-decomposition of width $k$ of a graph $G=(V,E)$ is given, then a multi-$(k+2)$-expression  for $G$ can be found  in polynomial time.
\end{theorem}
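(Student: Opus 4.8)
The plan is to start from a semi-smooth tree decomposition, which by the Proposition above can be obtained from the given tree decomposition in linear (hence polynomial) time without increasing the width. So assume we have a rooted tree decomposition $(\{B_i\},T)$ of width $k$ in which each node $i$ has a unique \emph{home} vertex $h(i) = B_i \setminus B_{p(i)}$, and process $T$ bottom-up. The key invariant I would maintain is: after processing the subtree rooted at node $i$, we have a multi-expression generating the subgraph induced on all vertices homed in that subtree, in which a vertex $v$ of that subgraph carries label $j$ (for a set of ``active'' labels $j \in \{1,\dots,k+1\}$) precisely if $v \in B_i$ and $j$ is the ``slot'' currently assigned to $v$; all vertices no longer in $B_i$ have had their labels deleted via $\ep{j}$. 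Thus at most $k+1$ labels are in use to name the $\le k+1$ vertices of a bag, and we keep labels $k+1$ (and $k+2$) free as scratch space.

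The main steps, carried out at each node $i$ with home vertex $v=h(i)$ and children $c_1,\dots,c_d$: (1) recursively build multi-expressions $E_1,\dots,E_d$ for the children's subtrees; (2) combine them with $\oplus$ — here is where multiple labels are essential: a vertex $w \in B_i$ that appears in several children's bags occurs in several $E_t$, and after $\oplus$ we must \emph{identify} these copies. This is exactly the fusion-like step, and in plain clique-width it is impossible, but with multi-labels we can arrange that in every $E_t$ the shared vertex $w$ already carries the \emph{same} label $j_w$, so $\oplus$ leaves them with that common label; since the $E_t$ use disjoint fresh names for their non-shared bag vertices (using the scratch labels and $\rh{}{}$ to relabel before joining), no unwanted mergers happen. (3) Introduce the new vertex $v$ with a free label via $\im{1}{j_v}$ where $j_v$ is a slot not used by $B_i \setminus \{v\}$; (4) add all edges of $G$ incident to $v$ that are ``represented'' at node $i$ — by the tree-decomposition properties every edge $\{v,w\}\in E$ has both endpoints in some bag, and because $v$'s occurrences form a subtree with top $i$, every such edge has $w \in B_i$, so $w$ currently carries a label $j_w \ne j_v$ and we apply $\et{j_v}{j_w}$ (the ``no vertex with both labels'' restriction is met since $j_v$ is fresh and distinct labels name distinct vertices); (5) for each child $c_t$, delete via $\ep{}$ the labels of vertices in $B_{c_t}\setminus B_i$ and, using the scratch labels $k+1, k+2$ and $\rh{}{}$, rename the surviving bag vertices into a canonical slot assignment consistent with what the parent of $i$ will expect, so that the invariant holds.

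The delicate bookkeeping — and what I expect to be the main obstacle — is the bounded-relabeling juggling in steps (2) and (5): when several children each contribute a piece of the bag $B_i$, their local slot assignments to the shared vertices need not agree, so before the $\oplus$ we must permute labels within each $E_t$ so that every shared vertex lands in one globally agreed slot, while the $\le 1$ private vertices get routed through the scratch label $k+1$; doing this with only two spare labels (total $k+2$) and in a way that never has a vertex simultaneously holding two labels we are about to $\et{}{}$ on requires care. I would handle it by noting that at any moment only $\le k+1$ distinct vertices need names, using label $k+2$ purely as a transposition buffer to realize an arbitrary permutation of $\{1,\dots,k+1\}$ as a product of transpositions (each realizable by a short sequence of $\rh{}{}$ moves), and scheduling the edge-creation $\et{j_v}{j_w}$ before any relabeling that would disturb $j_v$. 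Finally, since the decomposition has $O(|V|)$ nodes and each node contributes $O(k)$ operations, the whole expression has size $O(k|V|)$ and is produced in polynomial time, completing the proof.
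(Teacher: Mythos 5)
There are two genuine gaps here, and together they mean the construction as described does not produce the graph $G$.

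First, your step (2) relies on $\oplus$ \emph{identifying} the several copies of a shared bag vertex $w$ that you create in different children's expressions, ``because they carry the same label.'' No operation in a multi-$k$-expression ever merges vertices: $\oplus$ is a disjoint union, and labels only control which pairs $\et{i}{j}$ connects. Two copies of $w$ with identical label sets remain two distinct vertices, so your generated graph contains spurious duplicates. (You are importing the fusion operator of fusion-width, which is precisely the operator multi-clique-width does not have.) The fix, and the route the paper takes, is to create each vertex exactly once, at its home --- the topmost bag containing it --- so that no identification is ever needed.

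Second, once each vertex is created only at its home, your invariant (``a vertex carries a label iff it is still in the current bag, and the label is its slot'') makes edge creation impossible for any edge $\{v,w\}$ whose endpoints have different homes. Say $h(w)$ is a strict descendant of $h(v)=i$. When $w$ is created at $h(w)$, its upper neighbor $v$ does not exist yet; when $v$ is created at $i$, the vertex $w$ is no longer in $B_i$ (its bags form a subtree topped at $h(w)$), so under your invariant $w$ has already been stripped of all labels and $\et{j_v}{j_w}$ has nothing to attach to. Your claim in step (4) that every neighbor of $v$ lies in $B_i$ is only true for the \emph{upper} neighbors of $v$, and those are exactly the ones not yet built. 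This single-label-per-active-vertex invariant is the clique-width invariant, and it is why clique-width blows up exponentially in tree-width; it cannot give $k+O(1)$. The idea your proposal is missing --- and the entire point of using multiple labels --- is that a vertex $w$, at its creation, takes as its \emph{label set} the identifiers (drawn from $\{1,\dots,k+1\}$, assigned top-down so as to be distinct within each bag) of its not-yet-created upper neighbors, plus one extra marker label $k+2$ for the newly created vertex. Then when $v$ is finally created at node $i$, every already-built lower neighbor of $v$ carries the identifier $\iota(v)$, and a single $\eta_{\iota(v),k+2}$ followed by $\epsilon_{\iota(v)}$ and $\epsilon_{k+2}$ inserts all of these edges at once. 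With this device your label-permutation bookkeeping in steps (2) and (5) also becomes unnecessary: identifiers are fixed globally before the expression is built, so the children's expressions agree automatically at every join.
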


\begin{proof}
 Assume, $G$ is given with a tree decomposition of width $k=\tw(G)$. In linear time, the tree decomposition is transformed into a semi-smooth tree decomposition  $(\{B_{i } \, : \, i \in I \}, T)$. Now we assign an identifier $\iota(v)$ from $\{1,2,\dots, k+1\}$ to each vertex $v$ top-down, i.e., starting at the root of $T$. 
 When identifiers have been assigned to the vertices whose home is above vertex $v$, we assign to vertex $v$ the smallest identifier not assigned to the other vertices in the bag of the home of $v$.
 
 Next, we define a multi-$(k+2)$-expression whose parse tree $T'$ is basically isomorphic to the tree $T$ of the tree decomposition. The difference it that in $T'$ every internal node has an additional child that is a leaf. We call it an auxiliary leaf. Furthermore, above each internal node $i$, we introduce three auxiliary nodes obtained by subdividing the edge to the parent of~$i$. 

 The main idea is that every vertex $v$ is created at its home, or more precisely, in the auxiliary node below its home. Then the edges from $v$ to neighbors of $v$ with a home further down the tree 
 are added.  The upper neighbors of $v$, i.e., those that have their home higher up the tree, are not yet created. Vertex $v$ remembers to attach to these neighbors later by taking the set of identifiers of these neighbors as its labels. All upper neighbors of $v$ are together with $v$ in the bag $B_{h(v)}$ of the home $h(v)$ of $v$ in $T$. The vertex $v$ needs at most $k$ labels for this purpose. We give $v$ an additional label, $k+2$, to allow the lower neighbors of $v$ to connect to $v$.
 Node $i$ of $T'$ is a multi-way join operation of all its children, including the new auxiliary child.
 The purpose of the three nodes inserted above node $i$ is to add the edges between $v$ and its neighbors in the subtree of $i$, and to delete the two labels that have been used to create these new edges.
The multi-$(k+2)$-expression is built bottom-up.
 
 Now we define the multi-$(k+2)$-expression exactly by assigning atoms to the leaves and operations to the internal nodes as follows.
 \begin{description}
\item[Regular leaf:] 
Let the leaf $i$ be the home of some vertex $v$. Let $v_1, \dots, v_{\ell}$ be the neighbors of $v$ with identifiers $i_1,\dots,i_{\ell}$. Clearly, $\{v, v_1, \dots, v_{\ell}\} \subseteq B_i$.
Then the expression $\im{1}{i_1,\dots,i_{\ell}}$ creates $v$ in leaf $i$.

\item[Auxiliary leaf:]
Let the internal node $i$ be the home of some vertex $v$. Let $v_1, \dots, v_{\ell}$ be the upper neighbors of $v$ with identifiers $i_1,\dots,i_{\ell}$. Let $c_0(i)$ be the child of $i$ which is an auxiliary leaf. Then the expression for $c_0(i)$ is $\im{1}{k+2,i_1,\dots,i_{\ell}}$.

\item[Internal node:]
Let $i$ be the home of some vertex $v$, and let $c_1(i),\dots,c_{q}(i)$ be the children of $i$ in $T$. Let $c_0(i)$ be the auxiliary leaf child of $i$ in $T'$. Furthermore, let $\iota(v)$ be the identifier of $v$. Assume, for child $c_j(i)$ we already have the expression $E_j$.
Then the multi-$(k+2)$-expression for node $i$, or more precisely of the third auxiliary node above it, is 
\[\epsilon_{k+2}(\epsilon_{\iota(v)}(\eta_{\iota(v),k+2}(E_0 \oplus E_1 \oplus \dots \oplus E_q))).\]

\end{description}
 Now the following is easily proved by induction on the height of node $i$.
\begin{claim}
 The multi-$(k+2)$-expression for node $i$ generates the labeled graph $G_i=(V_i,E_i)$ induced by the vertices whose home is in the subtree of $i$. Furthermore, the set of labels of every vertex $v \in V_i$ is equal to the set of identifiers of the neighbors of $v$ in $V \setminus V_i$.
\end{claim}
By the inductive hypothesis of the claim, all vertices $V'$ in the subtree of node $i$ that are adjacent to $v$ in $G$ have a label $\iota(v)$. The vertex $v$ has a label  $k+2$, but no label $\iota(v)$. Thus the operation $\eta_{\iota(v),k+2}$ creates exactly the edges between $v$ and $V'$. Now, the labels $\iota(v)$ and $k+2$ can be deleted, because both have served their purpose. From every vertex labeled $\iota(v)$, the edge to $v$ is now already constructed, and the label $k+2$ only had to mark the vertex $v$ for the construction of these edges.

The claim for the root implies the theorem.
\end{proof}

A weaker form of Theorem~\ref{thm:mtw} is the implied inequality between multi-clique-width and tree-width.
\begin{corollary}
 \label{cor:mtw}
 For every graph $G$, $\mcw(G) \leq \tw(G)+2$. 
\end{corollary}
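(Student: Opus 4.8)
The plan is to obtain this as an immediate consequence of Theorem~\ref{thm:mtw}. By the definition of tree-width, every graph $G$ possesses a tree decomposition of width exactly $k=\tw(G)$. Applying Theorem~\ref{thm:mtw} to this decomposition produces a multi-$(k+2)$-expression generating $G$, that is, a multi-$(\tw(G)+2)$-expression. Since $\mcw(G)$ is by definition the least $k'$ for which a multi-$k'$-expression for $G$ exists, this yields $\mcw(G)\le \tw(G)+2$, as desired.

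The step that actually does the work is Theorem~\ref{thm:mtw} itself, so nothing further needs to be carried out. I would only remark that the corollary is a statement about the \emph{existence} of a narrow multi-expression, so the polynomial-time guarantee in Theorem~\ref{thm:mtw} plays no role here: we merely use that the construction in its proof outputs a legal multi-$(k+2)$-expression when given any width-$k$ tree decomposition, together with the fact that an optimal such decomposition exists by definition. Consequently I do not anticipate any obstacle; the only "content" of the corollary is the translation supplied by Theorem~\ref{thm:mtw}, and the corollary simply discards its algorithmic part.
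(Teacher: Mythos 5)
Your proof is correct and matches the paper's: the corollary is stated there as an immediate consequence of Theorem~\ref{thm:mtw}, obtained exactly as you describe by applying the theorem to an optimal-width tree decomposition and invoking the minimality in the definition of $\mcw$. Your remark that the algorithmic (polynomial-time) content of the theorem is not needed for the purely existential inequality is accurate.
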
 \qed

As an immediate corollary, we obtain $cw \leq 2^{\tw(G)+2}$. The tighter bound of $cw \leq 2^{\tw(G)+1}+1$ \cite{CourcelleO2000} is obtained by noticing that one could use the label $k+2$ strictly as a singleton label. Instead of deleting it with an $\epsilon_{k+2}$ operation, one could change it to the set of other labels we wanted to assign to that vertex using a $\rh{i}{S}$ operation. The even tighter bound $cw \leq 1.5 \cdot 2^{\tw(G)}$ \cite{CorneilR2005} is obtained by handling higher degree join nodes more efficiently. Following every binary join, the necessary edges could be inserted, allowing the number of labels to be decreased. This saves one fourth of the labels. 

\begin{corollary} \label{thm:lower}
 There are graphs $G$ with $\cw(G) \geq 2^{\lfloor \mcw(G)/2 \rfloor - 2}$.
\end{corollary}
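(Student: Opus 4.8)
The plan is to obtain Corollary~\ref{thm:lower} by combining Corollary~\ref{cor:mtw} with the known exponential separation between clique-width and tree-width. Specifically, Corneil and Rozenberg \cite{CorneilR2005} construct, for every positive integer $k$, a graph $H_k$ with $\tw(H_k)=k$ and $\cw(H_k)\geq 2^{\lfloor k/2\rfloor-1}$. I would simply take these graphs $H_k$ as the witnesses asserted in the statement.

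First I would apply Corollary~\ref{cor:mtw} to get $\mcw(H_k)\leq \tw(H_k)+2 = k+2$, equivalently $k\geq \mcw(H_k)-2$. Since $\lfloor (\mcw(H_k)-2)/2\rfloor = \lfloor \mcw(H_k)/2\rfloor-1$ for every value of $\mcw(H_k)$, this yields $\lfloor k/2\rfloor\geq \lfloor \mcw(H_k)/2\rfloor-1$. Plugging this into the Corneil--Rozenberg lower bound gives $\cw(H_k)\geq 2^{\lfloor k/2\rfloor-1}\geq 2^{\lfloor \mcw(H_k)/2\rfloor-2}$, which is exactly the claimed inequality; as $k$ ranges over all positive integers we obtain infinitely many such graphs, of arbitrarily large multi-clique-width.

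All of the genuine content here is imported: the construction and the clique-width lower bound come from \cite{CorneilR2005}, and the $\mcw$ upper bound comes from Corollary~\ref{cor:mtw} (hence ultimately from Theorem~\ref{thm:mtw}). So the only thing to watch is the bookkeeping, namely checking that the factor-two loss in the exponent of the statement is precisely accounted for by the $\lfloor k/2\rfloor$ in the Corneil--Rozenberg bound together with the additive $2$ in Corollary~\ref{cor:mtw}. If one preferred a self-contained argument avoiding this black box, I would instead establish $\cw(H_k)=2^{\Omega(k)}$ directly for an explicit bounded-tree-width family, by a counting argument: in any $k'$-expression, a balanced cut of the parse tree forces the graph on one side to ``remember'' its adjacency pattern across the cut using only $k'$ labels, so $k'$ must be at least the logarithm of the number of distinct such adjacency types the family produces; exhibiting a bounded-tree-width family that forces doubly many types is the crux, and is exactly what the recursive construction of \cite{CorneilR2005} achieves.
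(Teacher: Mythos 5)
Your proposal is correct and is essentially identical to the paper's own proof: both take the Corneil--Rozenberg graphs with $\tw = k$ and $\cw \geq 2^{\lfloor k/2\rfloor - 1}$ and apply Corollary~\ref{cor:mtw} to bound $\mcw \leq k+2$. You merely spell out the floor-function bookkeeping that the paper leaves implicit, and that bookkeeping checks out.
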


\begin{proof}
There are graphs $G$ with $\tw(G)=k$ and clique-width $\cw(G) \geq 2^{\lfloor k/2 \rfloor - 1}$ \cite{CorneilR2005}. 
Such graphs have multi-clique-width $\mcw(G) \leq k+2$ by Corollary~\ref{cor:mtw}.
\end{proof}

Naturally, it is easy to find graph classes with unbounded tree-width that still exhibit this exponential discrepancy between clique-width and multi-clique-width. One way is just to add a large clique, but there are many not so obvious ways.

\begin{corollary}
 There are graphs $G$ with $\tw(G)=k=\Omega(n)$ and clique-width $\cw(G) \geq 2^{\lfloor k/2 \rfloor - 1}$. \qed
\end{corollary}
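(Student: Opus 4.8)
The plan is to reuse the Corneil--Rotics graphs from the proof of Corollary~\ref{thm:lower} and to inflate the tree-width by taking a disjoint union with a large clique, exploiting the asymmetry that $\oplus$ is a native operation of both $k$-expressions and multi-$k$-expressions, whereas the tree-width of a disjoint union is the \emph{maximum} of the two tree-widths rather than their sum. Concretely, for a parameter $k_0$ let $H$ be a graph with $\tw(H)=k_0$ and $\cw(H)\ge 2^{\lfloor k_0/2\rfloor-1}$ \cite{CorneilR2005}, and set $G=H\cup K_m$, the disjoint union of $H$ with a clique on $m$ fresh vertices.

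First I would fix $m$. Since the clique-width of a graph never exceeds its number of vertices, $|V(H)|\ge\cw(H)\ge 2^{\lfloor k_0/2\rfloor-1}$, so in particular $|V(H)|>k_0+1$; I would take $m=|V(H)|$, so that $n:=|V(G)|=2m$. Then I would read off the three widths of $G$. A tree decomposition of a disjoint union is obtained by joining optimal tree decompositions of the two parts with one new tree edge, so $\tw(G)=\max(\tw(H),\tw(K_m))=\max(k_0,m-1)=m-1=n/2-1=\Omega(n)$, where the last equalities use $m>k_0+1$. Since $\oplus$ is available in both formalisms and both clique-width and multi-clique-width are monotone under induced subgraphs, the clique-width and the multi-clique-width of a disjoint union are each the maximum over the two parts; hence $\cw(G)=\max(\cw(H),\cw(K_m))=\cw(H)\ge 2^{\lfloor k_0/2\rfloor-1}$ (using $\cw(K_m)=2$), and, by Corollary~\ref{cor:mtw}, $\mcw(G)=\max(\mcw(H),\mcw(K_m))\le k_0+2$. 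Combining these three facts yields a graph with $\tw(G)=\Omega(n)$ whose clique-width $\cw(G)\ge 2^{\lfloor k_0/2\rfloor-1}\ge 2^{\lfloor\mcw(G)/2\rfloor-2}$ is still exponentially larger than its multi-clique-width.

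I do not expect a genuine obstacle here: every step is elementary, and the only thing requiring care is to check that the clique truly dominates the tree-width, i.e.\ that $m-1>k_0$, which is automatic because $m=|V(H)|$ is already exponential in $k_0$. That same fact is the subtle point in reading the statement. Since $\cw(G)\le n$ forces $n\ge 2^{\lfloor k_0/2\rfloor-1}$, the tree-width $m-1$ contributed by the clique is necessarily exponentially larger than $k_0$, so the base $k$ appearing in $2^{\lfloor k/2\rfloor-1}$ must be taken to be the width $k_0$ of the embedded Corneil--Rotics graph --- which is essentially $\mcw(G)$ --- and not $\tw(G)$ itself. With that reading, the construction delivers exactly the advertised phenomenon: a family with tree-width $\Omega(n)$ that nonetheless exhibits an exponential clique-width versus multi-clique-width gap.
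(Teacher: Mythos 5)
Your construction is exactly the one the paper intends: the corollary is stated without proof, and the sentence immediately preceding it says ``one way is just to add a large clique,'' which is precisely your $G=H\cup K_m$ with $H$ a Corneil--Rotics graph. Your side remark is also right and worth having on record --- since $\cw(G)\le n$, the literal reading with $k=\tw(G)=\Omega(n)$ in the exponent is impossible for an infinite family, so the $k$ in $2^{\lfloor k/2\rfloor-1}$ must be the width parameter of the embedded graph (essentially $\mcw(G)$), exactly as you interpret it.
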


We want to compare multi-clique-width with boolean-width.
\begin{theorem}
 For every graph $G$, $\boolw(G) \leq \mcw(G) \leq 2^{\boolw(G)}$.
\end{theorem}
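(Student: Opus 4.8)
The plan is to prove the two inequalities $\boolw(G) \le \mcw(G)$ and $\mcw(G) \le 2^{\boolw(G)}$ separately, in each case converting a witnessing structure for one parameter into a witnessing structure for the other.

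For the lower bound $\boolw(G) \le \mcw(G)$, I would start from a multi-$k$-expression with $k = \mcw(G)$ and extract from its parse tree a decomposition tree $T$ of $G$ in the standard way: the parse tree of any $k$-expression (multi or not) is a rooted binary tree whose leaves are the vertices of $G$ (after contracting unary nodes and suppressing degree-2 vertices), so it already has the shape required by the definition of decomposition tree. It then remains to bound $\mathrm{bool\text{-}dim}(X)$ for each cut $\{X,\overline X\}$ induced by an edge of $T$. The key observation is that at the moment in the expression where the subgraph on $X$ has just been assembled (the subtree hanging off that edge), the adjacency between $X$ and $\overline X$ is completely determined by the label sets currently carried by the vertices of $X$, because all future edge-creating operations $\et{i}{j}$ act uniformly on all vertices bearing a given label. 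Since each vertex of $X$ carries one of at most $2^k$ possible label subsets of $\{1,\dots,k\}$, the partition of $X$ into label-classes has at most $2^k$ blocks, and $N(S)\cap\overline X$ depends only on which blocks $S$ meets; hence $|U(X)| \le 2^{2^k}$ — but this only gives $\mathrm{bool\text{-}dim}(X)\le 2^k$, which is too weak. The right refinement is to observe that $N(S)\cap\overline X$ depends only on the \emph{union of the label sets} of the vertices in $S$, so it is determined by a single subset of $\{1,\dots,k\}$; therefore $|U(X)| \le 2^k$ and $\mathrm{bool\text{-}dim}(X) \le k = \mcw(G)$, giving $\boolw(G)\le\mcw(G)$.

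For the upper bound $\mcw(G) \le 2^{\boolw(G)}$, I would take an optimal decomposition tree $T$ with $\boolw(G) = b$, root it at an arbitrary leaf, and build a multi-$k$-expression with $k = 2^b$ that follows the shape of $T$ bottom-up. For each edge $e$ of $T$ separating the current subtree's leaf set $X$ from $\overline X$, the set $U(X)$ has size at most $2^b = k$, so we may reserve one label for each element of $U(X)$; the invariant to maintain is that at the node of the parse tree corresponding to $e$, each vertex $v \in X$ carries exactly the labels corresponding to those $S' \in U(X)$ with $v \in S'$ — equivalently, $v$'s label set encodes the "neighborhood-type function" $S \mapsto [v \in N(S)\cap\overline X]$. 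At a join node combining two subtrees with leaf sets $X_1, X_2$ into $X = X_1 \cup X_2$: first relabel within each part, using $\rh{i}{S}$ operations, so that the label indexed by $T' \in U(X_1)$ on a vertex of $X_1$ gets replaced by all labels indexed by sets $T \in U(X)$ whose trace on $\overline X$ restricted appropriately equals the relevant neighborhood (and symmetrically for $X_2$); then, for each pair of labels that now must become adjacent — namely whenever the $U(X_1)$-type and $U(X_2)$-type of a potential $X_1$-$X_2$ edge indicate adjacency across the old cut — apply $\et{i}{j}$ to create exactly the edges internal to $X$ that cross the $X_1/X_2$ split, deleting auxiliary labels with $\ep{i}$ afterwards. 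One must be careful about the side condition on $\et{i}{j}$ (no vertex may carry both labels simultaneously), which can be arranged by splitting each edge-creation into a phase acting on freshly introduced disjoint copies of labels.

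The main obstacle I expect is the bookkeeping in the upper-bound direction: precisely defining, at a join node, how the $\le 2^{b}$ neighborhood-representatives of the child cuts $U(X_1), U(X_2)$ relate to those of the parent cut $U(X)$, and verifying both that the required internal edges (and only those) get created and that the $\et{i}{j}$ side condition is respected throughout. This is essentially the same combinatorial core that appears in the known proof that boolean-width is within an exponential of clique-width; the point here is simply that the multi-label feature lets one carry all $2^b$ neighborhood representatives simultaneously on a vertex, so the construction is if anything slightly cleaner than the clique-width version. The lower-bound direction, by contrast, is short once one notices that a vertex's cross-cut neighborhood in a multi-$k$-expression depends only on the \emph{union} of its current labels.
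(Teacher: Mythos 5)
Your proof of the first inequality $\boolw(G) \leq \mcw(G)$ is correct and is essentially the paper's argument: normalize the multi-$k$-expression so each leaf of the parse tree creates a single vertex, suppress the degree-2 nodes to get a decomposition tree, and observe that for a cut $\{X,\overline X\}$ coming from a subexpression, $N(S)\cap\overline X$ depends only on the union of the current label sets of the vertices of $S$ (since relabelings act on labels independently and each $\et{i}{j}$ acts uniformly on label classes), so $|U(X)|\le 2^k$. This is exactly what the paper does.

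The second inequality is where you have a genuine gap. You set out to build a multi-$2^b$-expression directly from an optimal decomposition tree, but the construction is left as a sketch with the hard part explicitly deferred ("the bookkeeping in the upper-bound direction"), and the invariant you state is not even well-formed: you ask that $v\in X$ carry the labels of those $S'\in U(X)$ with $v\in S'$, but every $S'\in U(X)$ is a subset of $\overline X$, so this condition is vacuously false for all $v\in X$. The intended invariant in the standard construction is different: vertices $u,v\in X$ share a (single) label iff $N(u)\cap\overline X=N(v)\cap\overline X$; the number of such classes is at most $|U(X)|\le 2^b$, and this already yields an ordinary $2^b$-expression, i.e.\ $\cw(G)\le 2^{\boolw(G)}$ --- the multi-label ``carry all $2^b$ representatives on each vertex'' idea buys nothing and is what creates your unresolved worry about the side condition of $\et{i}{j}$. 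More importantly, the paper does not carry out any such construction: it simply combines Proposition~\ref{prop:mcw-cw} ($\mcw(G)\le\cw(G)$) with the known bound $\cw(G)\le 2^{\boolw(G)}$ of Bui-Xuan, Telle and Vatshelle, so the whole second half is a one-line reduction. Your route could in principle be completed (it amounts to reproving that cited bound), but as written it does not constitute a proof, whereas the citation-based route closes the argument immediately.
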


\begin{proof}
 $\boolw(G) \leq \mcw(G)$:  Assuming $\mcw(G)=k$, we start with a multi-$k$-expression for $G$. W.l.o.g., assume that each vertex $v$ is created as a single vertex with the operation $i(v)$. Then, there is a bijection between the vertices $V$ and the leaves of the parse tree $T$. Viewed as a graph, the other nodes of $T$ have degrees 2 or 3. We replace all maximal paths with internal nodes of degree 2 by single edges to obtain a tree $T'$. 
 
 Consider any edge $e=(u,v)$ of $T'$, where $u$ is a descendant of $v$ in $T$. Let $X \subseteq V$ be the set of vertices of the subtree $T_v$. For every subset $S \subseteq X$, the set $N(S) \cap \overline{X}$ of neighbors of $S$ outside of $X$ only depends on the union of the set of labels of the vertices of $S$. There are at most $2^k$ such subsets of labels, and thus at most $2^k$ such neighborhoods. The binary logarithm of the largest such number of neighborhoods over all edges of $T'$ is an upper bound on $\boolw(G)$, i.e., $\boolw(G) \leq k$.

$\mcw(G) \leq 2^{\boolw(G)}$:  By Lemma~\ref{prop:mcw-cw} $\mcw(G) \leq \cw(G)$, and the inequality
$\cw(G) \leq 2^{\boolw(G)}$ \cite{Bui-XuanTV2011} is known.
\end{proof}

Even though, boolean-width has the desirable property $\boolw(G) \leq \mcw(G)$, sometimes more efficient algorithms are possible in terms of $\mcw(G)$ than in terms of $\boolw(G)$. 
 Indeed, every graph property expressible in $\mathcal MS_1$, is decidable in linear time for graphs of bounded clique-width \cite{CourcelleMR2000}, while for arbitray graphs of bounded boolean-width, at least quadratic time is required even to read the input. Naturally, this can also be viewed as an indication of the strength of the boolean-width parameter. Even graphs without a simple structure can have small boolean-width. In the next section, we will see that for specific problems the (exponential) dependance of the running time on the multi-clique-width can be very good.

\section{Algorithms based on Multi-Clique-Width}

The algorithmic purpose of clique-width and other width parameters is to put problems into FPT, i.e., making them fixed parameter tractable (see \cite{DowneyF99}). This means achieving a running time of $O(f(k)n^{O(1)})$ for an arbitrary function $f$. In reality things are not so bad.
Algorithms based on clique-width often have a running time of $O(c^k n^e)$ or $O(c^{k \log k} n^e)$ with $k=\cw(G)$, $n=|V|$, and $c$ and $e$ being small constants. Assume that we are given a multi-$k$-expression for $G$ and we have an algorithm with similar running time when $k$ is the multi-clique-width. Then we have an exponential time speed-up when choosing the multi-clique based algorithm with running time $2^{O(k)} n^e$,  
instead of the clique-width based algorithm with clique-width $2^{\Omega(k)}$ and running time $2^{2^{\Omega(k)}} n^e$  
for infinitely many graphs.

Indeed, we want to illustrate here that this scenario is occurring quite naturally. 
We exhibit it for Independent Set. The running time as a function of the width is roughly the same for clique-width $k$ as for multi-clique-width $k$. Hence, we gain an exponential speed-up in the width parameter for all the many instances were the clique-width is exponentially bigger than the multi-clique-width.

Instead of only finding a maximum independent set, or even just computing its size, we solve the more involved problem of computing the independent set polynomial, i.e., computing the numbers of independent sets of all sizes. This is not much more difficult, and one can easily simplify the algorithm if only a maximum independent set is needed. Then the dependence of the running time on the size $n$ goes down to linear from polynomial, while the dependence on the width $k$ stays singly exponential. In particular, we have an FPT algorithm to compute the independent set polynomial.
We refer to \cite{CourcelleMR01,FischerMR08} for more discussions of the fixed parameter tractability of counting problems.

\begin{definition}
The independent set polynomial of a graph $G$ is 
\[I(x) = \sum_{i=1}^n  a_i x^i \]
where $a_i$ is the number of independent sets of size $i$ in $G$.
\end{definition}
The independent set polynomial is not strong enough to describe the involvement of the different labels in the independent sets. We need to do a more detailed counting to allow recurrence equations to govern the definition of the polynomials as the labeled graph is assembled by a multi-$k$-expression.

\begin{definition}
Let $[k] = \{1,\dots , k\}$ be the set of vertex labels. The $[k]$-labeled independent set polynomial of a $[k]$-labeled graph $G$ (each vertex can have multiple labels from $[k]$) is
\[P(x,x_1, \dots , x_k) = \sum_{i=1}^n \sum_{(n_1,\dots,n_k) \in \{0,1\}^k} a_{i;n_1,\dots,n_k} 
	\; x^i \prod_{j=1}^k x_j^{n_j} \]
where $n_j \in \{0,1\}$ and $a_{i;n_1,\dots,n_k}$ is the number of independent sets of size $i$ in $G$ which contain some vertices with label $j$ if and only if $n_j = 1$.
\end{definition}
Now, the independent set polynomial $I(x)$ can be expressed immediately by the $[k]$-labeled independent set polynomial $P(x,x_1, \dots , x_k)$.

\begin{proposition} \cite{Furer2014}
 The independent set polynomial $I(x)$ of a $[k]$-labeled graph $G$ is
 $I(x) =  P(x,1,\dots,1)$.
 \end{proposition}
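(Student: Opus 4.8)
The plan is to evaluate $P(x,x_1,\dots,x_k)$ at $x_1=\dots=x_k=1$ and then to recognize the resulting coefficient of $x^i$ as $a_i$. First I would substitute: setting every $x_j$ equal to $1$ makes every monomial $\prod_{j=1}^{k} x_j^{n_j}$ equal to $1$, so
\[ P(x,1,\dots,1) = \sum_{i=1}^{n}\left( \sum_{(n_1,\dots,n_k)\in\{0,1\}^k} a_{i;n_1,\dots,n_k}\right) x^i . \]
It then remains only to show that the inner sum equals $a_i$ for each $i$.

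The key observation is that the label-profile vectors partition the independent sets of any fixed size. Indeed, every independent set $S$ of size $i$ determines a unique vector $(n_1,\dots,n_k)\in\{0,1\}^k$, namely the one with $n_j=1$ exactly when some vertex of $S$ carries label $j$; conversely, by the definition of $a_{i;n_1,\dots,n_k}$, the set $S$ is counted there for precisely that vector and for no other. Hence, as $(n_1,\dots,n_k)$ ranges over all $2^k$ profiles, the families of independent sets counted by the $a_{i;n_1,\dots,n_k}$ are pairwise disjoint and together exhaust all independent sets of size $i$, so $\sum_{(n_1,\dots,n_k)} a_{i;n_1,\dots,n_k} = a_i$. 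Substituting this back yields $P(x,1,\dots,1)=\sum_{i=1}^{n} a_i x^i = I(x)$, as claimed.

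I do not expect any real obstacle here: the whole argument reduces to the single bookkeeping remark that the map sending an independent set to its set of occurring labels is well defined on every independent set, which is immediate from the definition of the coefficients $a_{i;n_1,\dots,n_k}$. The only care needed is to keep track that the outer sum over $i$ is untouched by the substitution, so no terms are lost or merged across different sizes.
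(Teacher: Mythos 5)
Your proposal is correct and follows essentially the same route as the paper: substitute $x_j=1$ and observe that $a_i = \sum_{(n_1,\dots,n_k)\in\{0,1\}^k} a_{i;n_1,\dots,n_k}$ because the label profiles partition the independent sets of each size. You simply spell out the partition argument that the paper leaves implicit.
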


\begin{proof}
 $I(x) =  \sum_{i=1}^n \sum_{(n_1,\dots,n_k) \in \{0,1\}^k} a_{i,n_1,\dots,n_k} x^i = P(x,1,\dots,1)$, because \\$a_i = \sum_{(n_1,\dots,n_k) \in \{0,1\}^k} a_{i,n_1,\dots,n_k}$. 
\end{proof}

\begin{theorem} \label{thm:indset}
 Given a graph $G$ with $n$ vertices, and  a multi-$k$-expression generating $G$ with multi-clique-width $k$, the independent set polynomial $I(x)$ of $G$ can be computed in time $O(2^k (kn)^{O(1)})$.
 \end{theorem}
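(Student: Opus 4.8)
The plan is to run a bottom-up dynamic program along the parse tree of the given multi-$k$-expression, computing at every node the refined ``$[k]$-labeled independent set polynomial'' $P(x,x_1,\dots,x_k)$ of the labeled graph generated by the corresponding subexpression. Since each $x_j$ occurs only with exponent $0$ or $1$ and $x$ with an exponent in $\{0,\dots,n\}$, I store $P$ as a table of at most $(n+1)2^k$ integer coefficients $a_{s;N}$ indexed by a size $s\in\{0,\dots,n\}$ and a label-usage vector $N=(n_1,\dots,n_k)\in\{0,1\}^k$, where $a_{s;N}$ is the number of independent sets of size $s$ whose set of labels (the union over the set's vertices of their label sets) is exactly $\{\,j : n_j=1\,\}$. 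At the root this table yields $I(x)=P(x,1,\dots,1)=\sum_s\bigl(\sum_N a_{s;N}\bigr)x^s$ by the proposition above. First I would, without loss of generality, normalize the expression so that all multi-way joins are binary and its length is polynomial in $n$ (the standard normalization; alternatively the expression length enters the final bound only as a polynomial factor), so that it suffices to process each node in time $O(2^k(kn)^{O(1)})$.

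Next I would write down the update rule for each operation. For an atom $\im{m}{i_1,\dots,i_{\ell}}$ the $m$ created vertices are pairwise non-adjacent and all carry the label set $S=\{i_1,\dots,i_{\ell}\}$, so $a_{0;\emptyset}=1$ and $a_{s;N}=\binom{m}{s}$ for $1\le s\le m$ with $N$ the indicator vector of $S$ (and $m\le n$, so the binomials are cheap). The three unary operations are local recolorings of the index $N$: $\eta_{i,j}$ — legal, by the promise, only when no vertex carries both $i$ and $j$ — simply zeroes every coefficient with $n_i=n_j=1$ and keeps the rest, since exactly those sets acquire an internal edge; $\ep{i}$ applies $N\mapsto N\setminus\{i\}$, accumulating $a'_{s;M}=a_{s;M}+a_{s;M\cup\{i\}}$; and $\rh{i}{S}$ applies the map on label-usage vectors that, when $i$ is present, deletes $i$ and sets $n'_{\ell}=n_{\ell}\vee n_i$ for every $\ell\in S$, updating the table by summing old coefficients over the fibers of this map. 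Each of these costs $O(2^k k n)$.

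The interesting case is the disjoint union $G_1\oplus G_2$: an independent set of the union is a disjoint union of independent sets of the parts, its size is the sum of the two sizes, and its label set is the \emph{union} of the two label sets, so
\[ a_{s;N}\;=\;\sum_{s_1+s_2=s}\ \sum_{N_1\cup N_2=N} a^{(1)}_{s_1;N_1}\, a^{(2)}_{s_2;N_2}, \]
an OR-convolution over the subset lattice $\{0,1\}^k$ combined with an ordinary Cauchy convolution in the exponent of $x$. I would compute this via the subset-sum (zeta) transform: because $N_1\cup N_2\subseteq N$ iff $N_1\subseteq N$ and $N_2\subseteq N$, the zeta transform of the result equals the pointwise product (as polynomials in $x$) of the zeta transforms of the two inputs. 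So one zeta-transforms both tables ($O(2^k k n)$), multiplies the $2^k$ pairs of degree-$\le n$ polynomials ($O(2^k n^2)$ naively, or $O(2^k n\log n)$ with FFT), and Möbius-inverts ($O(2^k k n)$); a binary join thus costs $O(2^k(kn)^{O(1)})$.

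Multiplying the per-node cost $O(2^k(kn)^{O(1)})$ by the polynomially many nodes of the (normalized) parse tree gives the claimed running time, and correctness follows by a routine induction on the parse tree whose invariant is precisely the combinatorial meaning of $a_{s;N}$ stated above. The hard part will be the join step: the key observation to get right is that the label set of a union is the bitwise OR — not the disjoint union — of the parts' label sets, which is exactly what makes the clean zeta/Möbius identity apply, so that no expensive subset-convolution machinery is needed. A minor but necessary point is to check that the legality promise on $\eta_{i,j}$ is exactly what justifies ``zero out the coefficients with $n_i=n_j=1$''; everything else is bookkeeping.
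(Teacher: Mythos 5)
Your proposal is correct and follows essentially the same route as the paper: a bottom-up dynamic program over the parse tree maintaining the $[k]$-labeled independent set polynomial (your coefficient table $a_{s;N}$ is exactly its coefficient array), with the same update rules for atoms, $\et{i}{j}$, $\rh{i}{S}$, $\ep{i}$, and the join handled as an idempotent product; your zeta/M\"obius treatment of the OR-convolution is precisely the paper's ``evaluate at $x_i\in\{0,1\}$ for all $i$, multiply pointwise, interpolate'' trick that brings the join cost from $O(3^k)$ down to $O(2^k)$ per node.
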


\begin{proof}
Using dynamic programming, we compute the $[k]$-labeled independent set polynomial of the $[k]$-labeled graphs generated by subexpressions of the given multi-$k$-expression. The computation is done bottom-up in the parse tree of the given multi-$k$-expression.

For any atomic expression $\im{m}{i_1,\dots,i_j}$ creating $m$ vertices with labels $i_1,\dots,i_j$, we have the $[k]$-labeled independent set polynomial 
 \[ 1 + \sum_{\ell=1}^m \binom{m}{\ell} x^\ell x_{i_1} \cdots x_{i_j} = 1 + ((1+x)^m - 1)x_{i_1} \cdots x_{i_j} . \]
In $O(m)$ arithmetic operations, we can compute all coefficients using the recurrence $\binom{m}{\ell+1} = \binom{m}{\ell} (m-\ell)/(\ell+1)$.
Thus all atomic expressions for the $n=|V|$ vertices can be computed in time $O(n)$.

If the expression $E$ has the polynomial $\tilde{P}(x,x_1,\dots,x_k)$, then $\et{i}{j}(E)$ has the polynomial 
\[P(x,x_1,\dots,x_k) =
\tilde{P}(x,x_1,\dots,x_k) \mod x_i x_j,\] 
i.e., terms containing $x_i$ and $x_j$ are deleted. This is correct, because a set of vertices is independent after the introduction of the edges between labels $i$ and $j$, if and only if it was an independent set before and does not contain both labels $i$ and $j$.

If the expression $E$ has the polynomial $\tilde{P}(x,x_1,\dots,x_k)$, then $\rh{i}{S}(E)$ has the polynomial 
\begin{align} \label{eqn:square}
P(x,x_1,\dots,x_k) & =  \tilde{P}(x,x_1,\dots x_{i-1},  x_{i_1} \cdots x_{i_j}, x_{i+1},\dots,x_k)  \nonumber \\
	&  \; \mod (x_{i_1}^2 - x_{i_1}) \, \cdots \!\!\!\!\! \mod (x_{i_j}^2 - x_{i_j}) 
\end{align}
for $S=\{i_1, \dots, i_j\}$, i.e., first $x_i$ is replaced by the product $x_{i_1} \dots x_{i_j}$. Then squares of indeterminates are replaced by their first powers. This is correct, because we still count all independent sets. They just occur in different categories as they involve different labels.

If the expression $E$ has the polynomial $\tilde{P}(x,x_1,\dots,x_k)$, then $\ep{i}(E)$ has the polynomial 
\[P(x,x_1,\dots,x_k) =
\tilde{P}(x,x_1,\dots x_{i-1}, 1, x_{i+1},\dots,x_k),\] 
i.e., the indeterminate $x_i$ is replaced by 1. This is correct, because it is just a special case $\rh{i}{S}(E)$.

If the expression $E_{\ell}$ ($\ell \in \{1,2\}$) has the polynomial $\tilde{P}_{\ell}(x,x_1,\dots,x_k)$, then the expression $E_1 \oplus E_2$ has the polynomial 
\begin{align} \label{eqn:product}
P(x,x_1,\dots,x_k) &=
\tilde{P}_{1}(x,x_1,\dots,x_k) \, \tilde{P}_{2}(x,x_1,\dots,x_k) \nonumber\\
	& \; \mod (x_{1}^2 - x_{1}) \, \cdots \!\!\!\!\! \mod (x_{k}^2 - x_{k}), 
\end{align}
i.e., in the product of the polynomials, every $x_i^2$ is replaced by $x_i$, as we only care about the occurrence of a label and not about the multiplicity of such an occurrence. This is correct, because every  independent set of $G_1$ can be combined with every independent set of $G_2$ to form an independent set of the join graph $G$, and every independent set of $G$ can be formed in this way. 

To bound the running time, one should notice that the polynomial $P(x,x_1,\dots,x_k)$ has $2^k (n+1)$ coefficients. 
The polynomial $\tilde{P}(x,x_1,\dots x_{i-1},  x_{i_1} \cdots x_{i_j}, x_{i+1},\dots,x_k)$ in Eq.~(\ref{eqn:square}) has only $2^{k+1} (n+1)$ coefficients, not $3^k (n+1)$, as only a few monomials which are quadratic in some $x_i$'s appear. If the product in Eq.~(\ref{eqn:product}) is computed by school multiplication, then the running time is $O(3^k (kn)^{O(1)})$. But with a fast Fourier transform (evaluating the polynomial for $x_i=0$ and $x_1=1$ for all $i$), the time is only $O(2^k (kn)^{O(1)})$.
\end{proof}

The easier problem of just finding the size of a maximum independent set (rather than computing the numbers of independent sets of all sizes) is now trivial. At each stage, for all exponents $n_1, \dots, n_k$, the coefficient $a_{i;n_1, \dots n_k}$ is only stored for the largest $i$ with $a_{i;n_1, \dots, n_k} \neq 0$.

\begin{corollary}
A maximum independent set can be found in time $O(2^k k^{O(1)} n)$ in graphs with multi-clique-width $k$.
\end{corollary}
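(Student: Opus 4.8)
The plan is to run the same bottom-up dynamic programming over the parse tree as in the proof of Theorem~\ref{thm:indset}, but to replace the polynomial stored at each node by a single array of $2^k$ integers in $\{0,1,\dots,n\}$; this is what removes the polynomial dependence on $n$. It is cleanest to store not the exact-pattern analogue of the coefficient array, but its ``downward-closed'' version: for the $[k]$-labeled graph $H$ generated by a subexpression $E$, and for every $S\subseteq[k]$, store
\[
\widehat f_E[S]\;=\;\max\{\,|I|\;:\;I\ \text{independent in}\ H,\ \text{every label occurring on a vertex of}\ I\ \text{lies in}\ S\,\},
\]
with $\widehat f_E[\emptyset]=0$ witnessed by $I=\emptyset$. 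This is just the $(\max,+)$ subset-sum transform of the exact-pattern data; carrying it instead of the exact-pattern array is exactly what keeps every operation at cost $O(2^k)$, since it turns $\oplus$ into a pointwise operation rather than a max-plus covering product that would cost $\Theta(4^k)$ if done directly.

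Next I would record how each operation acts on $\widehat f$ and verify the invariant by induction over the parse tree. For an atom $\im{m}{i_1,\dots,i_j}$ with label set $L=\{i_1,\dots,i_j\}$, the $m$ created vertices form an independent set, so $\widehat f[S]=m$ if $L\subseteq S$ and $\widehat f[S]=0$ otherwise. For $E_1\oplus E_2$ (a disjoint union, no new edges), an independent set of the union using labels inside $S$ splits into such sets on the two sides, so $\widehat f[S]=\widehat f_{E_1}[S]+\widehat f_{E_2}[S]$. For $\et{i}{j}$, the independent sets of the new graph are exactly those of $H$ not using labels $i$ and $j$ together (all new edges join a label-$i$ to a label-$j$ vertex), and a set of $H$ with labels in $S\setminus\{i\}$, resp.\ $S\setminus\{j\}$, stays independent afterwards; this gives $\widehat f[S]=\widehat f_E[S]$ when $\{i,j\}\not\subseteq S$ and $\widehat f[S]=\max\bigl(\widehat f_E[S\setminus\{i\}],\widehat f_E[S\setminus\{j\}]\bigr)$ otherwise. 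For $\rh{i}{T}$ (I write the target set as $T$ to avoid a clash with the index $S$; the case $\ep{i}$ is $T=\emptyset$), the independent sets do not change, only the labels they carry, and a one-line case split on whether $T\subseteq S$ gives $\widehat f[S]=\widehat f_E[S\cup\{i\}]$ if $T\subseteq S$ and $\widehat f[S]=\widehat f_E[S\setminus\{i\}]$ otherwise. Each operation is one pass over the $2^k$ entries. Reading the value at the full label set $[k]$ at the root gives the size of a maximum independent set of $G$.

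To exhibit an actual maximum independent set, I would keep the array at every node and add a top-down traceback. Starting at the root with allowed label set $S=[k]$: at a $\oplus$ node recurse into both children with the same $S$ and concatenate; at an $\et{i}{j}$ node recurse into the child with whichever $S'\in\{S,S\setminus\{i\},S\setminus\{j\}\}$ attains the value (and $S'=S$ if $\{i,j\}\not\subseteq S$); at a $\rh{i}{T}$ node recurse with $S\cup\{i\}$ or $S\setminus\{i\}$ per the case above; at an atom output all $m$ copies if $L\subseteq S$ and none otherwise. By construction the labels used on each branch stay inside the current $S$, which is precisely what guarantees that no edge inserted higher up the tree falls inside the reconstructed set; hence the union is an independent set of $G$, and its size matches the value computed at the root.

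For the running time, each node is processed in $O(2^k)$ arithmetic operations on numbers of magnitude at most $n$, i.e.\ $O(2^k k^{O(1)})$ time, and the traceback visits each node once more. Assuming, as is standard, that the given multi-$k$-expression is normalized so that its parse tree has $O(n\,k^{O(1)})$ nodes (redundant edge and relabel operations removed, consecutive relabelings composed; note that the expression built in the proof of Theorem~\ref{thm:mtw} already has $O(n)$ nodes), the total is $O(2^k k^{O(1)} n)$. The one point that is not routine bookkeeping is the observation that the whole computation, including $\oplus$, stays inside the downward-closed representation $\widehat f$, so the subset-sum transform never has to be inverted; such an inverse is a Möbius inversion, which is unavailable over the $(\max,+)$ structure, and that is exactly why a direct treatment of $\oplus$ would cost $4^k$ instead of $2^k$.
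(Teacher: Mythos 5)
Your proposal is correct, and at the top level it is the same algorithm the paper sketches: specialize the $[k]$-labeled independent-set-polynomial DP of Theorem~\ref{thm:indset} to keep only, per label pattern, the largest achievable independent-set size, then backtrack through the parse tree to recover a witness. The genuine difference is your choice of representation. The paper's two-sentence proof says to store, for each exact exponent pattern $(n_1,\dots,n_k)$, the largest $i$ with $a_{i;n_1,\dots,n_k}\neq 0$; but it does not say how the $\oplus$ step is then done within the claimed $O(2^k k^{O(1)})$ per node. In the counting version that cost is achieved by evaluating at $\{0,1\}^k$ (a zeta transform) and inverting afterwards, and as you correctly observe this inversion is unavailable over $(\max,+)$, so a literal max-plus cover product on exact patterns would cost $\Theta(4^k)$ at join nodes. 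Your fix --- carrying the downward-closed array $\widehat f_E[S]=\max\{|I| : \lambda(I)\subseteq S\}$ throughout, never inverting, and rewriting every operation ($\eta_{i,j}$, $\rho_{i\to T}$, $\epsilon_i$, $\oplus$, atoms) directly in that domain --- is exactly what is needed to make the join pointwise and hence $O(2^k)$ per node; I checked all four recurrences and the traceback rules and they are correct (in particular the case split $T\subseteq S$ versus $T\not\subseteq S$ for $\rho_{i\to T}$, and reading off $\widehat f[[k]]$ at the root). So your write-up buys a rigorous justification of the $O(2^k k^{O(1)} n)$ bound that the paper's sketch leaves implicit, at the cost of losing the exact-pattern information (which is not needed for this corollary). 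Your caveat about normalizing the expression so the parse tree has $O(nk^{O(1)})$ nodes is also a fair point that the paper glosses over.
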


\begin{proof}
 If during the dynamic programming algorithm to compute the size of a maximum independent set, one always stores where the larger exponent $i$ came from, then at the end, one can easily backtrack to actually find a maximum independent set.
\end{proof}

As an additional example, we consider the NP-complete decision problem $c$-coloring, asking whether the input graph $G$ can be colored with $c$ colors for a constant integer $c \geq 3$, such that no adjacent vertices have the same color.

\begin{theorem}
 For graphs $G$ of multi-clique-width $k$ with a given multi-$k$-expression for $G$, and any positive integer constant $c$, the $c$-coloring problem can be solved in time $2^{O(ck)}n$.
\end{theorem}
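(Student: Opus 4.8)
The plan is to run a bottom-up dynamic program over the parse tree of the given multi-$k$-expression, in the spirit of the proof of Theorem~\ref{thm:indset} and of the standard clique-width algorithm for $c$-colouring, but with a Boolean table of \emph{realizable colour patterns} stored at each node in place of a polynomial. For a $[k]$-labelled graph $H$ generated by a subexpression and a function $f\colon[k]\to 2^{[c]}$ (a \emph{pattern}), call $f$ \emph{realizable for $H$} if $H$ has a proper $c$-colouring in which, for every label $i\in[k]$, the set of colours appearing on vertices that carry label $i$ is \emph{exactly} $f(i)$. The node for $H$ stores the set $\mathcal F(H)$ of its realizable patterns as a bit-array of length $(2^c)^k = 2^{ck}$, indexed by patterns (equivalently by subsets of $[c]\times[k]$). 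Since $G$ is $c$-colourable if and only if the table at the root is non-empty, it suffices to compute all the tables.

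Next I would give the update at each node type. At an atom $\im{m}{i_1,\dots,i_j}$ the $m$ new vertices are pairwise non-adjacent, so $\mathcal F$ is the set of patterns $f$ with $f(i_1)=\dots=f(i_j)=C$ and $f(t)=\emptyset$ for $t\notin\{i_1,\dots,i_j\}$, where $C$ ranges over non-empty subsets of $[c]$ of size at most $m$. For $\et{i}{j}$ --- legal only when no vertex has both $i$ and $j$ --- the colouring is unchanged and stays proper exactly when no colour is common to a label-$i$ and a label-$j$ vertex, so one keeps precisely the $f\in\mathcal F$ with $f(i)\cap f(j)=\emptyset$, with the patterns unchanged. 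For $\rh{i}{S}$, $S=\{i_1,\dots,i_r\}$, the graph and colours do not change but the label classes merge, so one sends each $f\in\mathcal F$ to the pattern $f'$ with $f'(t)=f(t)\cup f(i)$ for $t\in S$, $f'(i)=\emptyset$ if $i\notin S$, and $f'(t)=f(t)$ otherwise, and takes the image; $\ep{i}$ is the case $S=\emptyset$. For a binary join $E_1\oplus E_2$ the colourings of the parts combine freely, so $\mathcal F=\{\,f_1\cup f_2 : f_1\in\mathcal F(E_1),\ f_2\in\mathcal F(E_2)\,\}$ with coordinatewise union, and multi-way joins reduce to this by associativity. Each correctness statement is the routine check that a colouring of the child(ren) realizing a pattern is proper exactly when that pattern survives, respectively transforms, as described.

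For the running time, each table has $2^{ck}$ entries, and the atom, $\et{i}{j}$, $\rh{i}{S}$, $\ep{i}$ updates each cost $2^{O(ck)}$ (one scan of a table, or of an auxiliary table of at most twice the size). The join is a union-convolution over the Boolean lattice $2^{[c]\times[k]}$, computable in $O(ck\,2^{ck})$ by the usual zeta/M\"obius transform; even the trivial all-pairs method costs only $O(2^{2ck})=2^{O(ck)}$. The parse tree has at most $n$ atom leaves and fewer than $n$ binary join nodes, and --- assuming the standard normalisation that the given expression has $O(n)$ operations (a $\mathrm{poly}(k)\cdot n$ bound would do equally well, since $\mathrm{poly}(k)\le 2^{O(ck)}$) --- the tree has $O(n)$ nodes, so the total running time is $2^{O(ck)}\cdot n$.

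The one genuinely delicate point is definitional rather than algorithmic: one must commit to the ``exactly $f(i)$'' semantics --- a ``$\supseteq f(i)$'' relaxation would make the $\et{i}{j}$ rule unsound --- and then verify the $\rh{i}{S}$ transformation, which merges the vertex class of $i$ with those of the labels in $S$ and hence unions their colour sets, against that semantics, including the degenerate cases $i\in S$ and $S=\emptyset$. The join, despite being a convolution, sits comfortably inside the time budget, so there is no real algorithmic obstacle; the only other loose end is the routine normalisation of the expression length.
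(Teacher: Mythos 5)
Your proposal is correct and follows essentially the same route as the paper: a bottom-up Boolean dynamic program whose states are exactly the subsets of $[c]\times[k]$ (the paper phrases them as bipartite graphs $B_p$ between colors and labels) with the same ``exactly realized'' semantics and the same update rules for atoms, $\et{i}{j}$, relabeling, and $\oplus$, yielding the $2^{O(ck)}n$ bound. If anything, your version is slightly more careful on two points the paper glosses over: you handle the general $\rh{i}{S}$ rather than only the singleton case, and your atom rule allows the $m$ twin vertices to receive several colors, whereas the paper's rule records only monochromatic colorings (harmless for the decision problem, but needed for the stated invariant to hold literally).
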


\begin{proof}
 We present a dynamic programming algorithm based on the parse tree structure of the multi-$k$-expression. We classify the colorings of the graphs generated by sub-expressions according to the labels used for the vertices of each color.
 Let $Q$ with $|Q|=c$ be the set of colors and $L$ with $|L|=k$ be the set of labels.
 Let $B_1, \dots, B_r$ with $r=2^{ck}$ be the sequence (say in lexicographic order) of all bipartite graphs with the left vertex set $C$ and the right vertex set $L$. Let $E_p$ be the set of edges in $B_p$.
 For every subexpression $F$, we define $F(B_p)$ so that it is true, if and only if the graph generated by $F$ can be colored with $Q$ such that some vertex colored with $q \in Q$ is labeled with a set of labels containing $i \in L$, if and only if $(q,i)$ is an edge in $B_p$. 
 
 We now show that $F(B_p)$ can easily be computed from all the $F'(B_{p'})$ where $F'$ is a subexpression of $F$ and $j \in \{1,\dots,r\}$. We analyze according to the structure of $F$.
 
 If $F$ is an atomic expression $\im{m}{i_1,\dots,i_j}$ creating $m$ vertices with labels $i_1,\dots,i_j$, then
 $F(B_p)$ is true, if and only if $E_j = \{(q,i) \: | \: i \in \{i_1,\dots,i_j\} \}$ for some $q \in Q$.
 
 If $F= \et{i}{j}(F')$, then $F(B_p)$ is true, if and only if $F'(B_p)$ is true and for no color $q \in Q$ there are both edges $(q,i)$ and $(q,j)$ present in $B_p$. 
 In other words, a previous coloring is still valid, if and only if no color appears at both endpoints of newly added edges.
 
 If $F = \rh{i}{j}(F')$, then $F(B_p)$ is true, if and only if $F'(B_{p'})$ is true for some $p'$ with
 \[E_p = \{(q,\ell) \: |\: \mbox{$(q,\ell) \in E_{p'}$ and $\ell \notin \{i,j\}$} \} 
 \cup  \{(q,j) \: |\: \mbox{$(q,i) \in E_{p'}$ or $(q,j) \in E_{p'}$} \}
  \]
 
 If $F = F' \oplus F''$, then $F(B_p)$ is true, if and only if $F'(B_{p'})$ is true and $F''(B_{p''})$ is true
 for some $p', p''$ with $E_p = E_{p'} \cup E_{p''}$.
 
 Given this simple characterization of $F(B_p)$ in terms of $F'(B_{p'})$ for some $p'$ and the immediate sub-expressions $F'$, it should be immediately clear how the value of $F(B_p)$ can be computed, when the values of the $F'(B_{p'})$ are known.
 
 Furthermore, it is a simple proof by induction on the structure of an expression $F$ that $F(B_p)$ is true, if and only if 
the graph generated by $F$ can be colored with $Q$ such that some vertex colored with $q \in Q$ is labeled with a set of labels containing $i \in L$, if and only if $(q,i)$ is an edge in $B_p$. 

Naturally, at the end, the graph generated by $F$ is $k$-colorable, if and only if $F(B_p)$ is true for some $B_p$.

The running time is linear in $n$, because there are $O(n)$ nodes to process and the time spent in every node only depends on the number $c$ of colors and the number $k$ of labels. In every node, an array of $2^{ck}$ boolean values (one for each bipartite graph on the vertex sets $Q$ and $L$) has to be processed in a simple fashion. The resulting running time is $2^{O(ck)} n$.

There is quite some waste of time involve in handling all the bipartite graph on the vertex sets $Q$ and $L$, because the truth value for a graph $B_j$ does not change, when the set of colors $Q$ and the set of labels $L$ are permuted in an arbitrary way. This does not mean that the running time can be divided by $c! k!$, because typically many such permutations are automorphisms not creating new bipartite graphs. The exact number of isomorphism types of such bipartite graphs can be computed with the Redfield-P\'{o}lya enumeration theorem (see \cite{PoRe87}), but that does not result in a nicer upper bound. Clearly, any practical implementation would do the computation for just one bipartite graph for every isomorphism type.
 \end{proof}

\section{Conclusions and Open Problems}
We have proposed a powerful parameter multi-clique-width. It allows us to achieve faster running times for natural classes of graphs and interesting algorithmic tasks. Assume, we are given the input graph by a multi-$k$-expression. Then we have very efficient algorithms for this class of graphs, as illustrated by the independent set polynomial and the coloring problem. On the other hand, for any algorithm based on clique-width, we could only get exponentially slower (in $k$) algorithms for the same problems and the same collection of graphs. Also, equally efficient algorithms are not known based on rank-width or boolean-width, when the corresponding decompositions are given.

Most questions related to the new multi-clique-width are still open. Is it difficult to compute or approximate? We expect it to be NP-hard, like clique-width. We also conjecture it to be in FPT (fixed parameter tractable) and to be constant factor approximable in time singly exponential in the multi-clique-width and linear in the length like tree-width. But obviously this is very difficult, as it is also open for clique-width.

A main question is whether most algorithms for clique-width $k$, can be extended to work with similar efficiency for multi-clique-width $k$. We have illustrated that this is the case for some interesting counting and decision problems. On the other hand, there is the question of identifying the problems where this is not the case.

\bibliography{mcw}
\end{document}